\pdfoutput=1
\documentclass[12pt]{article}
\usepackage[hmargin=1in,top=1in,bottom=1.25in]{geometry}

\usepackage{style/qstyle}
\usepackage[normalem]{ulem}

\renewcommand{\emph}[1]{\textbf{#1}}
\renewcommand{\mathbf}[1]{\textit{\textbf{#1}}}

\AddToHook{env/theorem/begin}{\crefalias{definition}{theorem}}
\AddToHook{env/lemma/begin}{\crefalias{definition}{lemma}}
\AddToHook{env/corollary/begin}{\crefalias{definition}{corollary}}
\AddToHook{env/proposition/begin}{\crefalias{definition}{proposition}}
\AddToHook{env/conjecture/begin}{\crefalias{definition}{conjecture}}
\AddToHook{env/example/begin}{\crefalias{definition}{example}}

\crefname{theorem}{Theorem}{Theorems}
\crefname{lemma}{Lemma}{Lemmata}
\crefname{corollary}{Corollary}{Corollaries}
\crefname{proposition}{Proposition}{Propositions}
\crefname{conjecture}{Conjecture}{Conjectures}
\crefname{example}{Example}{Examples}
\crefname{definition}{Definition}{Definitions}

\title{Statistically secure uncloneable encryption\\of arbitrary messages}

\author[1]{Archishna Bhattacharyya\footnote{abhat086@uottawa.ca}}
\author[1]{Anne Broadbent\footnote{abroadbe@uottawa.ca}}
\author[2]{Eric Culf\footnote{eculf@uwaterloo.ca}}
\affil[1]{University of Ottawa}
\affil[2]{Institute for Quantum Computing, University of Waterloo}

\date{}

\begin{document}

\maketitle

\begin{abstract}
	Unconditional uncloneable encryption of a single bit with efficient encryption and decryption is now possible. However, whether the extension to messages of arbitrary length achieves statistical security remains to be known. Using the fact that the encoding bases for the single-bit scheme known to be secure consist of a subset of the Clifford unitaries, we show that this scheme can be upgraded to achieve unconditional uncloneable encryption for messages of arbitrary length, with encoding time polynomial in the message length and security parameter. This establishes that one-time uncloneable encryption of arbitrary messages enjoys statistical security.\footnote{This work was performed without the aid of artificial intelligence tools.}
\end{abstract}

\vspace{1cm}

\tableofcontents

\newpage

\section{Introduction}

In this work, we resolve the problem of uncloneable encryption of messages with arbitrary length, and show that it is statistically secure. We expand on our earlier work \cite{BBC26} showing unconditional security of the uncloneable bit, and establish unconditional security for arbitrary messages with efficient encryption and decryption. The scheme we work with is a multi-message generalisation of the Clifford encoding studied in~\cite{BC26}. 

Recently, unconditional security of the uncloneable bit with efficient encryption and decryption was achieved in \cite{AS26, Rag26}. In the latter, uncloneable encryption of messages with arbitrary length permitting many uses was shown to be secure under computational assumptions. While reusable uncloneable encryption can only be computationally secure, we show that it is possible to have unconditional security of single-use uncloneable encryption with messages of arbitrary length, and encoding time polynomial in the message length and security parameter. This was notably left open in \cite{Rag26} where the construction relies on pseudorandom-function like states, and prior to that \cite{HKNY24} showed that the extension from a single bit to strings is possible with one-way functions.

The pursuit of uncloneable encryption has been an intriguing and beautiful venture of modern cryptography and concludes with information-theoretic security of the primitive. We do not aim to introduce or summarise that here, instead we simply exposit a short analysis that completes this pursuit in its full strength. The reader may refer to the original construction and conjecture by Broadbent and Lord \cite{BL20}, and subsequent works thereafter which have since explored varied routes to security, and applications~\cite{MST21arxiv,ALP21,CLLZ21,AK21,AKL+22,HKNY24,AKL23,BC23arxiv,AKY25,AB24,JK24}. The foundation built in \cite{BC26, BBC26} is an insightful account of why uncloneable encryption, to much surprise, achieved information-theoretic security, when it was not even certain whether security could be established with computational assumptions.

At the onset of our analysis is the idea that one can exploit structural properties of the Clifford group to extend the unconditional security of the uncloneable bit \cite{BBC26}, with efficient encryption and decryption \cite{AS26}, to the encryption of multiple messages. Precisely, the scheme originally studied in \cite{BC26}, which strengthens the scheme of~\cite{BBCNPR26}, is now unconditionally secure \cite{AS26, Rag26}. Using the fact that this scheme consists of a subset of the Clifford unitaries, we modify the scheme in a relatively simple manner to establish our claim. Then, we present a refinement of the techniques in \cite{AS26}, using which we establish the optimal security bound possible for messages of arbitrary length, as reported in \cite{BCR25}.

\paragraph{Acknowledgements} A.Bh. and A.Br. acknowledge the support of the Natural Sciences and Engineering Research Council of Canada (NSERC)(ALLRP-578455-2022, RGPIN-2022-05167), the Air Force Office of Scientific Research under award number FA9550-20-1-0375 and of the Canada Research Chairs Program (CRC-2023-00173). E.C. is supported by a \mbox{CGS D} scholarship from NSERC.

\section{Preliminaries}

\begin{definition} \label{def:qecm}
    A \emph{quantum encryption of classical messages (QECM)} is given by a tuple $\ttt{Q}=(K,X,A,\mu,\{\sigma^k_x\}_{k\in K,x\in X})$, where
    \begin{itemize}
        \item $K$ is a set, representing the encryption keys;
        \item $X$ is a finite set, representing the messages;
        \item $A$ is a register, representing the system holding the encrypted messages;
        \item $\mu$ is a probability measure on $K$, representing the key distribution;
        \item $\sigma^k_x\in D(\mc{H}_A)$ is a quantum state, representing the encryption of message $x$ with key $k$.
    \end{itemize}
    We say a QECM is \emph{$\eta$-correct} if there exists a family of CPTP maps $\Phi^k:B(\mc{H}_A)\rightarrow B(\mc{H}_X)$, called decryption maps, such that for all $k\in K$ and $x\in X$,
    \begin{align*}
        \braket{x}{\Phi^k(\sigma^k_x)}{x}\geq\eta.
    \end{align*}
    We say that the QECM is \emph{correct} if it is $1$-correct.
    
    We say a family of QECMs $\{\ttt{Q}_\lambda\}_{\lambda\in\N}$ is an \emph{efficient QECM} if key sampling, encrypted message preparation, and decryption can be implemented in polynomial time in $\lambda$.
\end{definition}

Note that correctness is equivalent to the orthogonality condition $\Tr(\sigma^k_{x}\sigma^k_{x'})=0$ for $k\in K$ and $x\neq x'\in X$.

\begin{definition} \label{def:cloning-att}
    A \emph{cloning attack} against a QECM $\ttt{Q}=(K,X,A,\mu,\{\sigma^k_x\}_{k\in K,x\in X})$ is a tuple $\ttt{A}=(B,C,\{B^k_x\}_{k\in K,x\in X},\{C^k_x\}_{k\in K,x\in X},\Phi)$, where
    \begin{itemize}
        \item $B$ and $C$ are registers, representing Bob and Charlie's systems, respectively;
        \item $\{B^k_x\}_{x\in X}\subseteq B(\mc{H}_B)$ and $\{C^k_x\}_{x\in X}\subseteq B(\mc{H}_C)$ are POVMs, representing Bob and Charlie's measurements given key $k$, respectively;
        \item $\Phi:B(\mc{H}_A)\rightarrow B(\mc{H}_{BC})$ is a CPTP map, representing the cloning channel.
    \end{itemize}
    The \emph{success probability} of $\ttt{A}$ against $\ttt{Q}$ is
    \begin{align}
        \mfk{c}(\ttt{Q},\ttt{A})=\expec_{k\leftarrow \mu}\frac{1}{|X|}\sum_{x\in X}\Tr\squ*{(B^k_x\otimes C^k_x)\Phi(\sigma^k_x)}.
    \end{align}
    The \emph{cloning value} of $\ttt{Q}$ is $\mfk{c}(\ttt{Q})=\sup_{\ttt{A}}\mfk{c}(\ttt{Q},\ttt{A})$, where the supremum is over all cloning attacks. We say a QECM is \emph{$\delta$-uncloneable secure} if $\mfk{c}(\ttt{Q})\leq\frac{1}{|X|}+\delta$.

    For a function $f:\N\rightarrow[0,1]$, we say a family of QECMs $\{\ttt{Q}_\lambda\}$ is \emph{$f$-uncloneable secure} if $\ttt{Q}_\lambda$ is $f(\lambda)$-uncloneable secure for all $\lambda$. We additionally say $\{\ttt{Q}_\lambda\}$ is \emph{uncloneable secure} if $\lim\limits_{\lambda\rightarrow\infty}f(\lambda)=0$; and $\{\ttt{Q}_\lambda\}$ is \emph{strongly uncloneable secure} if $f$ is a negligible function.
\end{definition}

\begin{definition}\label{def:cloning-distinguishing-att}
    A \emph{cloning-distinguishing attack} against a QECM $\ttt{Q}=(K,X,A,\mu,\{\sigma^k_x\}_{k\in K,x\in X})$ is a tuple $\ttt{A}=(\{x_0,x_1\},B,C,\{B^k_b\}_{k\in K,b\in\{0,1\}},\{C^k_b\}_{k\in K,b\in\{0,1\}},\Phi)$, where
    \begin{itemize}
        \item $x_0\neq x_1\in X$ are distinct messages, representing the two messages to be distinguished;
        \item $B$ and $C$ are registers, representing Bob and Charlie's systems, respectively;
        \item $\{B^k_b\}_{b\in\{0,1\}}\subseteq B(\mc{H}_B)$ and $\{C^k_b\}_{b\in\{0,1\}}\subseteq B(\mc{H}_C)$ are POVMs, representing Bob and Charlie's measurements given key $k$, respectively;
        \item $\Phi:B(\mc{H}_A)\rightarrow B(\mc{H}_{BC})$ is a CPTP map, representing the cloning channel.
    \end{itemize}
    The \emph{success probability} of $\ttt{A}$ against $\ttt{Q}$ is
    \begin{align}
        \mfk{cd}(\ttt{Q},\ttt{A})=\expec_{k\leftarrow\mu}\frac{1}{2}\sum_{b\in\{0,1\}}\Tr\squ*{(B^k_b\otimes C^k_b)\Phi(\sigma^k_{x_b})}.
    \end{align}
    The \emph{cloning-distinguishing value} of $\ttt{Q}$ is $\mfk{cd}(\ttt{Q})=\sup_{\ttt{A}}\mfk{cd}(\ttt{Q},\ttt{A})$, where the supremum is over all cloning-distinguishing attacks. We say a QECM is \emph{$\delta$-uncloneable-indistinguishable secure} if~$\mfk{cd}(\ttt{Q})\leq\frac{1}{2}+\delta$.

    For a function $f:\N\rightarrow[0,1]$, we say a family of QECMs $\{\ttt{Q}_\lambda\}$ is \emph{$f$-uncloneable-in\-dis\-ting\-ui\-sha\-ble secure} if $\ttt{Q}_\lambda$ is $f(\lambda)$-uncloneable-indistiguishable secure for all $\lambda$. We additionally say~$\{\ttt{Q}_\lambda\}$ is \emph{uncloneable-indistinguishable secure} if $\lim\limits_{\lambda\rightarrow\infty}f(\lambda)=0$; and $\{\ttt{Q}_\lambda\}$ is \emph{strongly uncloneable-indistinguishable secure} if $f$ is a negligible function.
\end{definition}

It is often more natural to study monogamy-of-entanglement games instead of QECMs.

\begin{definition}
    A \emph{monogamy-of-entanglement (MoE) game} is a tuple $\ttt{G}=(\Theta,X,A,\mu,\{A^\theta_x\}_{\theta\in\Theta,x\in X})$, where
    \begin{itemize}
        \item $\Theta$ is a set, representing the questions;
        \item $X$ is a finite set, representing the answers;
        \item $A$ is a register, representing Alice's system;
        \item $\mu$ is a probability measure on $\Theta$, representing the question distribution.
        \item $\{A^\theta_x\}_{x\in X}\subseteq B(\mc{H}_A)$ is a POVM, representing Alice's measurements given question $\theta$.
    \end{itemize}
    A \emph{strategy} for an MoE game $\ttt{G}$ is a tuple $\ttt{S}=(B,C,\{B^\theta_x\}_{\theta\in\Theta,x\in X},\{C^\theta_x\}_{\theta\in \Theta,x\in X},\rho_{ABC})$, where
    \begin{itemize}
        \item $B$ and $C$ are registers, representing Bob and Charlie's systems, respectively;
        \item $\{B^\theta_x\}_{x\in X}\subseteq B(\mc{H}_B)$ and $\{C^\theta_x\}_{x\in X}\subseteq B(\mc{H}_C)$ are POVMs, representing Bob and Charlie's measurements given question $\theta$, respectively;
        \item $\rho_{ABC}\in D(\mc{H}_{ABC})$ is the shared quantum state.
    \end{itemize}
    The \emph{winning probability} of $\ttt{S}$ at $\ttt{G}$ is
    \begin{align}
        \mfk{w}(\ttt{G},\ttt{S})=\expec_{\theta\leftarrow\mu}\sum_{x\in X}\Tr\squ*{(A^\theta_x\otimes B^\theta_x\otimes C^\theta_x)\rho_{ABC}}.
    \end{align}
    The \emph{quantum value} of $\ttt{G}$ is $\mfk{w}(\ttt{G})=\sup_{\ttt{S}}\mfk{w}(\ttt{G},\ttt{S})$, where the supremum is over all strategies.
\end{definition}

Now, we introduce the QECM scheme we study in this paper.

\begin{definition}
	Let $n\geq m$. Let $A=\{0,1\}^n$, and for $x\in\{0,1\}^m$, let $\sigma_x=\frac{1}{2^{n-m}}\ketbra{x}\otimes I$. Let $\mc{V}\subseteq U(\mc{H}_A)$ be a finite set, and write $\mu_{\mc{V}}$ for the uniform distribution on $\mc{V}$. Define the QECM $\ttt{Q}_{\mc{V},m}=(\mc{V},\{0,1\}^m,A,\mu_{\mc{V}},\{U\sigma_xU^\dag\}_{U,x})$.
\end{definition}

Write $\mc{C}_n$ for the Clifford group on $n$ qubits. This gives rise to QECMs we call \emph{Clifford schemes}. Then, for any polynomials $p(\lambda)$ and $q(\lambda)$, the Clifford scheme $\{\ttt{Q}_{\mc{C}_{p(\lambda)},q(\lambda)}\}_\lambda$ is an efficient QECM with message length~$q(\lambda)$.

Each of these QECMs has an equivalent MoE game.

\begin{definition}
	Let $n\geq m$. Let $A=\{0,1\}^n$, and for $x\in\{0,1\}^m$, let $\Pi_x=\ketbra{x}\otimes I$. Let $\mc{V}\subseteq U(\mc{H}_A)$ be a finite set. Define the MoE game $\ttt{Q}_{\mc{V},m}=(\mc{V},\{0,1\}^m,A,\mu_{\mc{V}},\{U\Pi_xU^\dag\}_{U,x})$.
\end{definition}

The winning probabilities are related as $\mfk{c}(\ttt{Q}_{\mc{V},m})\leq\mfk{w}(\ttt{G}_{\mc{V},m})$ and $\mfk{c}(\ttt{Q}_{\mc{V},1})=\mfk{cd}(\ttt{Q}_{\mc{V},1})$.

\section{Extending message length}

In this section, we relate the cloning-distinguishing value of the Clifford scheme with $m$-bit messages defined above, to the cloning-distinguishing value of a Clifford scheme with $1$-bit messages, whose value may be bounded using the techniques of \cite{AS26,Rag26}. To show this relation, we rely on the group structure of the Clifford group. Since the question distribution is uniform, it corresponds to the Haar measure on the Clifford group, and hence it is invariant under the action of the group on itself. Using this, we can translate a cloning-distinguishing attack against the $m$-bit Clifford scheme to a cloning-distinguishing attack for a $1$-bit scheme for a subgroup of the Clifford group, isomorphic to the Clifford group on fewer qubits.

\begin{theorem}\label{thm:one-to-many}
	The cloning-distinguishing value $\mfk{cd}(\ttt{Q}_{\mc{C}_n,m})\leq\mfk{cd}(\ttt{Q}_{\mc{C}_{n-m+1},1})$.
\end{theorem}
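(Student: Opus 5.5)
Fix an arbitrary cloning-distinguishing attack $\ttt{A}=(\{x_0,x_1\},B,C,\{B^U_b\},\{C^U_b\},\Phi)$ against $\ttt{Q}_{\mc{C}_n,m}$. From it I will build an attack $\ttt{A}'$ against $\ttt{Q}_{\mc{C}_{n-m+1},1}$ with the same success probability; taking the supremum then gives the theorem.

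\textbf{Step 1: reduce to a canonical message pair.} Since $x_0\neq x_1$, there is a Pauli $X$-string $P$ and a CNOT circuit $W$, both in $\mc{C}_m$, such that $W P$ maps $\ket{x_0}\mapsto\ket{0^m}$ and $\ket{x_1}\mapsto\ket{10^{m-1}}$. Set $V_0=(WP)\otimes I$, a Clifford on $n$ qubits. Then $V_0\sigma_{x_0}V_0^\dag=\sigma_{0^m}$ and $V_0\sigma_{x_1}V_0^\dag=\sigma_{10^{m-1}}$.

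The uniform measure on $\mc{C}_n$ is invariant under right multiplication, so substituting $U\mapsto UV_0$ gives
\[
\mfk{cd}(\ttt{Q}_{\mc{C}_n,m},\ttt{A})=\expec_{U}\frac12\sum_b\Tr\squ*{(B^{UV_0}_b\otimes C^{UV_0}_b)\Phi(U\sigma_{y_b}U^\dag)},
\]
with $y_0=0^m$ and $y_1=10^{m-1}$. So I may assume the messages are $y_0,y_1$.

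\textbf{Step 2: pass to a subgroup and coset decomposition.} Let $H\le\mc{C}_n$ be the copy of $\mc{C}_{n-m+1}$ acting on qubit $1$ together with the last $n-m$ qubits, and acting trivially on qubits $2,\dots,m$.

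Write $\sigma^{(1)}_b=\frac{1}{2^{n-m}}\ketbra{b}\otimes I_{n-m}$ for the encoded states of the $1$-bit scheme on these $n-m+1$ qubits, and let $\tau=\ketbra{0^{m-1}}$ on qubits $2,\dots,m$. Up to reordering qubits, $\sigma_{y_b}=\sigma^{(1)}_b\otimes\tau$. Hence for $h\in H$,
\[
h\,\sigma_{y_b}h^\dag=(h\,\sigma^{(1)}_bh^\dag)\otimes\tau.
\]

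Uniform sampling on $\mc{C}_n$ equals sampling a uniform left coset representative $g$ and then uniform $h\in H$, with $U=gh$. The success probability is therefore an average over $g$ of
\[
\expec_{h\in H}\frac12\sum_b\Tr\squ*{(B^{ghV_0}_b\otimes C^{ghV_0}_b)\Phi\big(g((h\sigma^{(1)}_bh^\dag)\otimes\tau)g^\dag\big)}.
\]

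\textbf{Step 3: build the $1$-bit attack for each coset.} For fixed $g$, define $\Phi_g(\rho)=\Phi(g(\rho\otimes\tau)g^\dag)$, which is CPTP on $n-m+1$ qubits. Define the key-dependent POVMs $B'^h_b=B^{ghV_0}_b$ and $C'^h_b=C^{ghV_0}_b$. Here $h$ is the $1$-bit scheme's key, and $g$ and $V_0$ are fixed constants of the attack.

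This gives a valid attack on $\ttt{Q}_{\mc{C}_{n-m+1},1}$ whose success equals the inner expectation above. Each inner expectation is therefore at most $\mfk{cd}(\ttt{Q}_{\mc{C}_{n-m+1},1})$, and so is their average over $g$.

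\textbf{Main obstacle.} The delicate step is the bookkeeping in Steps 1–2. The invariance must be used on the correct side, so that $V_0$ acts on the message before the subgroup element $h$. The subgroup $H$ must also act exactly as $\mc{C}_{n-m+1}$ on the relevant tensor factor, with the uniform measure on $H$ matching $\mu_{\mc{C}_{n-m+1}}$.

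I would verify the factorization $\mc{C}_n=\bigsqcup_g gH$ with uniform weights directly from the group structure. I would also check that embedding the $1$-bit scheme's qubit as qubit $1$ is consistent with $\sigma_x=\frac{1}{2^{n-m}}\ketbra{x}\otimes I$. If convenient, a qubit permutation (itself Clifford) can be absorbed into $V_0$.
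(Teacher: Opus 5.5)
Your proposal is correct and follows essentially the same route as the paper. You reduce to a canonical message pair via a Clifford, use invariance of the uniform measure to split the key into an outer element times an embedded copy of $\mc{C}_{n-m+1}$, and for each outer element build the $1$-bit attack with channel $\rho\mapsto\Phi(g(\rho\otimes\tau)g^\dag)$ and POVMs $B^{ghV_0}_b$, $C^{ghV_0}_b$.

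The differences are cosmetic. The paper averages over all $V\in\mc{C}_n$ via $\expec_V f(V)=\expec_U\expec_V f(V(I\otimes U))$ instead of over coset representatives. It also takes $x_1=0^{m-1}1$, so the subgroup acts on the last $n-m+1$ qubits without any reordering. You spell out the right-multiplication step that justifies the message-pair reduction, which the paper only asserts.
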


\begin{proof}
	Let $\ttt{A}=(\{x_0,x_1\},B,C,\{B^U_b\},\{C^U_b\},\Phi)$ be a cloning-distinguishing attack. Since there is a Clifford $C$ such that $C\sigma_{x_0}C^\dag=\sigma_{0^m}$ and $C\sigma_{x_1}C^\dag=\sigma_{0^{m-1}1}$, we may assume $x_0=0^m$ and $x_1=0^{m-1}1$. Now, for each $V\in\mc{C}_n$, define the following cloning-distinguishing attack against $\ttt{Q}_{\mc{C}_{n-m+1},1}$: $\ttt{A}_V=(\{0,1\},B,C,\{B^{V(I\otimes U)}_b\}_{U,b},\{C^{V(I\otimes U)}_b\}_{U,b},\Phi_V)$, where $\Phi_V(\rho)=\Phi(V(\ketbra{0}^{m-1}\otimes\rho)V^\dag)$. Now, using the Haar invariance of the uniform measure on a finite group,
	\begin{align*}
		\mfk{cd}(\ttt{Q}_{\mc{C}_n,m},\ttt{A})&=\expec_{V\in\mc{C}_n}\frac{1}{2}\sum_{b\in\{0,1\}}\Tr\squ*{(B^V_b\otimes C^V_b)\Phi(V\sigma_{x_b}V^\dag)}\\
		&=\expec_{U\in\mc{C}_{n-m+1}}\expec_{V\in\mc{C}_n}\int_{\mc{C}_n}\frac{1}{2}\sum_{b\in\{0,1\}}\Tr\squ*{(B^{V(I\otimes U)}_b\otimes C^{V(I\otimes U)}_b)\Phi(V(I\otimes U)\sigma_{x_b}(I\otimes U^\dag)V^\dag)}\\
		&=\expec_{U\in\mc{C}_{n-m+1}}\expec_{V\in\mc{C}_n}\frac{1}{2}\sum_{b\in\{0,1\}}\Tr\squ*{(B^{V(I\otimes U)}_b\otimes C^{V(I\otimes U)}_b)\Phi_V(U\sigma_b U^\dag)}\\
		&=\expec_{V\in\mc{C}_n}\mfk{cd}(\ttt{Q}_{\mc{C}_{n-m+1},1},\ttt{A}_V)\\
		&\leq \mfk{cd}(\ttt{Q}_{\mc{C}_{n-m+1},1}).
	\end{align*}
	Taking the supremum over strategies $\ttt{A}$ gives the wanted result.
\end{proof}

The same result also holds for the Haar measure game, because of the properties that the question distribution is the Haar measure over a group, and $I\otimes U\in\mc{U}(2^n)$ for all $U\in\mc{U}(2^m)$.

\section{General upper bound}

In this section, we show a slight generalisation of the main result of~\cite{AS26}, with a streamlined proof.

\begin{definition}
	Let $\mc{A}=\{A_\theta\}_{\theta\in\Theta}$ be a collection of binary observables and let $\mu$ be a probability distribution on $\Theta$. The \emph{observable overlap} is
	\begin{align}
		c(\mc{A},\mu)=\expec_{\theta,\theta'\leftarrow\mu}\abs*{\Tr(A_\theta A_{\theta'})}.
	\end{align}
\end{definition}

The overlap $c(\mc{A},\mu)$ is always upper-bounded by the dimension $d$. For unitary observables and a fixed number of questions $|\Theta|$, the overlap is minimised when the observables are orthogonal with respect to the Hilbert-Schmidt inner product, in which case it becomes
\begin{align*}
	c(\mc{A},\mu)=\sum_{\theta\in\Theta}\mu(\theta)^2\Tr(A_\theta^2)=d\sum_{\theta\in\Theta}\mu(\theta)^2=2^{\log d-H_2(\mu)}.
\end{align*}
This is minimised when the distribution is uniform, where $c(\mc{A},\mu)=\frac{d}{|\Theta|}$. Note that in dimension $d$, $|\Theta|\leq d^2$, where equality is attained for $d=2^n$ via the orthogonal basis of Pauli operators.

The observable overlap is similar in spirit to the maximal overlap of measurements studied in~\cite{TFKW13}, which provided the first upper bound on the cloning value of a QECM scheme.

\begin{theorem}\label{thm:ananth-sahai}
	Let $\ttt{G}=(\Theta,\{0,1\},A,\mu,\{A^\theta_b\}_{\theta,b})$ be a MoE game with single-bit output where Alice's observables $A_\theta=A^\theta_0-A^\theta_1$ are unitary. Then, the
	\begin{align*}
		\mfk{w}(\ttt{G})\leq\frac{1}{2}+\frac{1}{2}\sqrt{c(\{A_\theta\}_\theta,\mu)}.
	\end{align*}
\end{theorem}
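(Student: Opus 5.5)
The plan is to reduce the winning probability to a correlation (bias) term and then bound that bias by a Cauchy–Schwarz argument whose cross terms are the Hilbert–Schmidt overlaps $\Tr(A_\theta A_{\theta'})$.

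\textbf{Reduction to a bias term.} By Naimark dilation and purification we may assume that $\rho_{ABC}=\ketbra{\psi}$ is pure, possibly after enlarging $B$ or $C$, and that Bob's and Charlie's POVMs are projective. Write $\hat B_\theta=B^\theta_0-B^\theta_1$ and $\hat C_\theta=C^\theta_0-C^\theta_1$, both binary unitary observables. A direct expansion of $\sum_b A^\theta_b\otimes B^\theta_b\otimes C^\theta_b$ in terms of observables gives
\begin{align*}
	\mfk{w}(\ttt{G},\ttt{S})=\frac{1}{4}+\frac{1}{4}\expec_{\theta\leftarrow\mu}\bra{\psi}\big(A_\theta\otimes\hat B_\theta\otimes I+A_\theta\otimes I\otimes\hat C_\theta+I\otimes\hat B_\theta\otimes\hat C_\theta\big)\ket{\psi}.
\end{align*}
The Bob–Charlie term is at most $1$. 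Hence it suffices to show
\begin{align*}
	\beta:=\expec_\theta\bra{\psi}A_\theta\otimes(\hat B_\theta\otimes I+I\otimes\hat C_\theta)\ket{\psi}\leq 2\sqrt{c}.
\end{align*}
If $c\geq 1$ the theorem is trivial, so this bound is all that is needed.

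\textbf{Pairing Bob's and Charlie's terms.} The key idea is that Bob and Charlie cannot both be aligned with $A_\theta$ unless the $A_\theta$ overlap heavily. Since $\hat B_\theta$ and $\hat C_{\theta'}$ act on different registers, they commute. I would square $\beta$ and apply Cauchy–Schwarz to the vectors $\sum_\theta\mu(\theta)(A_\theta\otimes\hat B_\theta)\ket\psi$ and $\sum_{\theta'}\mu(\theta')(A_{\theta'}\otimes\hat C_{\theta'})\ket\psi$. This should yield cross terms of the form
\begin{align*}
	\expec_{\theta,\theta'}\bra{\psi}A_\theta A_{\theta'}\otimes\hat B_\theta\otimes\hat C_{\theta'}\ket{\psi},
\end{align*}
in the spirit of the overlap argument of~\cite{TFKW13}.

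\textbf{Converting to a trace.} The remaining task is to bound the cross term by $\expec_{\theta,\theta'}|\Tr(A_\theta A_{\theta'})|$, which carries the unnormalised trace. I would write $\ket\psi=(I_A\otimes M)\ket{\Omega}$, where $\ket\Omega$ is the unnormalised maximally entangled vector between $A$ and a copy $A'$, and $M:\mc{H}_{A'}\to\mc{H}_{BC}$. Using $(X\otimes I)\ket\Omega=(I\otimes X^T)\ket\Omega$, every Alice operator moves to the other side. Then Cauchy–Schwarz over the index pairs $(\theta,\theta')$ with weights $\mu(\theta)\mu(\theta')$, together with $\|\hat B_\theta\otimes\hat C_{\theta'}\|\leq1$, should bound the term by $\sqrt{\expec_{\theta,\theta'}|\Tr(A_\theta A_{\theta'})|}$. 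The normalisation $\Tr(M^\dagger M)=1$ is what should make the trace appear without a factor $1/d$.

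\textbf{Main obstacle.} I expect this last step to be the hardest. Naively bounding the operator norm of $\expec A_\theta A_{\theta'}\otimes(\cdot)$ loses the trace entirely, so the state must be handled through $\ket\Omega$ and the Hilbert–Schmidt inner product. I would first try to organise the computation as an operator-norm bound for
\begin{align*}
	\expec_\theta \mu(\theta)^{-1/2}\ket{\theta}\otimes A_\theta\otimes\hat B_\theta,
\end{align*}
using the identity $\|T\|^2=\|T^\dagger T\|$. It may also turn out that the constant only closes via the symmetrised form above rather than through the separate $\frac14$ terms, in which case I would regroup before applying Cauchy–Schwarz.
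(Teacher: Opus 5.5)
\textbf{There is a genuine gap: the inequality you reduce to is false.} After bounding the Bob--Charlie term by $1$, you need $\beta\leq 2\sqrt{c}$. But $\beta$ contains the Alice--Bob correlation on its own, and that can always be made equal to $1$.

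Concretely, take $\Theta$ to be all $d^2$ Paulis on $N$ qubits ($d=2^N$) with uniform $\mu$, so $c=1/d$. Let $\ket{\psi}$ be a maximally entangled state on $AB$ tensored with $\ket{0}_C$, set $\hat B_\theta=A_\theta^T$, and let Charlie always answer $0$. Then $\expec_\theta\langle\psi|A_\theta\otimes\hat B_\theta\otimes I|\psi\rangle=1$, and $\beta=1+1/d^2>2/\sqrt{d}$ as soon as $d\geq 4$.

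This does not contradict the theorem, since the strategy wins with probability $\frac12+\frac{1}{2d^2}$. A large $\beta$ is paid for by a small Bob--Charlie term. So that term cannot be discarded. Writing $\mrm{B}=\expec_\theta A_\theta\otimes\hat B_\theta\otimes I$, $\Gamma=\expec_\theta A_\theta\otimes I\otimes\hat C_\theta$ and $\Delta=\expec_\theta I\otimes\hat B_\theta\otimes\hat C_\theta$, the real target is $\langle\psi|\mrm{B}+\Gamma+\Delta|\psi\rangle\leq 1+2\sqrt{c}$. The slack $1-\langle\psi|\Delta|\psi\rangle$ must enter the argument.

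Your Cauchy--Schwarz step cannot repair this on its own. It gives $\beta^2\leq\|(\mrm{B}+\Gamma)\ket{\psi}\|^2\leq 2+2|\langle\psi|\mrm{B}\Gamma|\psi\rangle|$. That is at best $\beta\leq\sqrt{2+2c}$, hence $\mfk{w}\leq\frac{2+\sqrt{2+2c}}{4}$, which never approaches $\frac12$. Regrouping the $\frac14$'s, as you suggest at the end, does not help: the problem is the missing coupling with $\Delta$, not a constant.

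\textbf{How the paper supplies that coupling.}
\begin{itemize}
\item Take $\ket{\psi}$ to be a top eigenvector of $\mrm{B}+\Gamma+\Delta$ and write $\mfk{w}=\frac12+\lambda$. Then $(\mrm{B}+\Gamma)\ket{\psi}=(4\lambda+I-\Delta)\ket{\psi}$.
\item The key operator inequality is $I-\Delta\pm(\mrm{B}-\Gamma)=\expec_\theta(I\pm A_\theta\otimes\hat B_\theta\otimes I)(I\mp A_\theta\otimes I\otimes\hat C_\theta)\geq 0$. It holds because this is a product of commuting positive operators, using $A_\theta^2=I$.
\item This yields $(\mrm{B}-\Gamma)E^2(\mrm{B}-\Gamma)\leq I-\Delta$ with $E^2=(4\lambda+I-\Delta)^{-1}$. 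Computing $\|E(\mrm{B}+\Gamma)\ket{\psi}\|^2-\|E(\mrm{B}-\Gamma)\ket{\psi}\|^2$ in two ways gives $\lambda\leq\|\mrm{B}E^2\Gamma\|$.
\item Expand $E^2$ as a Neumann series in $\Delta$. Commute Charlie's factors of $\Delta^n$ left past $\mrm{B}$ and Bob's factors right past $\Gamma$. This gives $4\lambda^2\leq\|\mrm{B}\Gamma\|$.
\end{itemize}
Note the square. It is what turns an overlap of order $c$ into the $\sqrt{c}$ in the statement, so the cross term only needs to be bounded by $c$, not $\sqrt{c}$.

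Only at this point does your cross term $\expec_{\theta,\theta'}A_\theta A_{\theta'}\otimes\hat B_\theta\otimes\hat C_{\theta'}=\mrm{B}\Gamma$ appear. It is bounded in operator norm, independently of the state, by the conditional-overlap lemma: $\|\mrm{B}\Gamma\|\leq\sqrt{\|\Tr_A(\mrm{B}^2)\|\,\|\Tr_A(\Gamma^2)\|}\leq c$, with the unnormalised partial trace. This replaces your unspecified $\ket{\Omega}$ manipulation.
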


\begin{lemma}[Conditional overlap \cite{AS26}]\label{lem:ool}
	Let $U_{AB}$ and $V_{AC}$ be operators. Then, $$\norm{U_{AB}V_{AC}}\leq\sqrt{\norm{\Tr_A(U_{AB}^\dag U_{AB})}\norm{\Tr_A(V_{AC}V_{AC}^\dag)}}.$$
\end{lemma}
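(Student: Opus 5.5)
The plan is a single Cauchy--Schwarz argument in a product basis of $\mc{H}_A$. The key point is that, once the $A$-register is written out in coordinates, the $B$-part of $U_{AB}$ and the $C$-part of $V_{AC}$ act on different tensor factors and therefore commute. Throughout, $U_{AB}$ means $U_{AB}\otimes I_C$ and $V_{AC}$ means $V_{AC}\otimes I_B$, so both act on $\mc{H}_{ABC}$.

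\textbf{Step 1: write the operators in coordinates.} Fix an orthonormal basis $\{\ket{i}\}$ of $\mc{H}_A$ and expand
\begin{align*}
U_{AB}=\sum_{i,j}\ket{i}\!\bra{j}\otimes U_{ij},\qquad V_{AC}=\sum_{j,k}\ket{j}\!\bra{k}\otimes V_{jk},
\end{align*}
with $U_{ij}\in B(\mc{H}_B)$ and $V_{jk}\in B(\mc{H}_C)$. A direct computation of the partial traces gives
\begin{align*}
\Tr_A(U^\dag U)=\sum_{i,j}U_{ij}^\dag U_{ij},\qquad \Tr_A(VV^\dag)=\sum_{j,k}V_{jk}V_{jk}^\dag.
\end{align*}

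\textbf{Step 2: express the matrix element.} Let $\phi,\psi\in\mc{H}_{ABC}$ be unit vectors, and decompose them as $\phi=\sum_i\ket{i}\otimes\phi_i$ and $\psi=\sum_k\ket{k}\otimes\psi_k$ with $\phi_i,\psi_k\in\mc{H}_{BC}$. Then
\begin{align*}
\langle\phi|UV|\psi\rangle=\sum_{i,j,k}\langle\phi_i|(U_{ij}\otimes I_C)(I_B\otimes V_{jk})|\psi_k\rangle .
\end{align*}
Since $U_{ij}\otimes I_C$ and $I_B\otimes V_{jk}$ commute, each summand can be reordered as
\begin{align*}
\big\langle (I\otimes V_{jk}^\dag)\phi_i \,\big|\, (U_{ij}\otimes I)\psi_k\big\rangle .
\end{align*}
This reordering is the crux. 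It places the input side of $U$ (which produces $U^\dag U$) against the output side of $V$ (which produces $VV^\dag$), exactly matching the statement.

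\textbf{Step 3: Cauchy--Schwarz.} Apply Cauchy--Schwarz jointly over the triple index $(i,j,k)$ and over $\mc{H}_{BC}$. This bounds $|\langle\phi|UV|\psi\rangle|$ by the product of the square roots of two sums.

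For the first sum, Step 1 gives
\begin{align*}
\sum_{i,j,k}\norm{(I\otimes V_{jk}^\dag)\phi_i}^2=\sum_i\langle\phi_i|\,I_B\otimes\Tr_A(VV^\dag)\,|\phi_i\rangle\leq\norm{\Tr_A(VV^\dag)}\sum_i\norm{\phi_i}^2=\norm{\Tr_A(VV^\dag)}.
\end{align*}
For the second sum, similarly,
\begin{align*}
\sum_{i,j,k}\norm{(U_{ij}\otimes I)\psi_k}^2=\sum_k\langle\psi_k|\,\Tr_A(U^\dag U)\otimes I_C\,|\psi_k\rangle\leq\norm{\Tr_A(U^\dag U)}.
\end{align*}
Taking the supremum over unit vectors $\phi$ and $\psi$ then gives the claimed bound on $\norm{U_{AB}V_{AC}}$.

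\textbf{Main obstacle.} The delicate point is choosing the right pairing before applying Cauchy--Schwarz. If one does not commute $U_{ij}$ past $V_{jk}$, the two natural vectors are $U^\dag\phi$ and $V\psi$, and the argument only yields the weaker bound $\norm{U}\norm{V}$. Other groupings either overcount an index of $\mc{H}_A$ and introduce dimension factors, or produce $\Tr_A(UU^\dag)$ instead of $\Tr_A(U^\dag U)$. The essential check is therefore that, after reordering, the sum over the free index $k$ (respectively $i$) is absorbed into $\norm{\psi}^2$ (respectively $\norm{\phi}^2$), while the other two indices are exactly those summed inside the partial trace. A sanity check is $U_{AB}=\ketbra{\Omega}_{AB}$ and $V_{AC}=\ketbra{\Omega}_{AC}$ for maximally entangled states: both sides equal $1/\dim\mc{H}_A$, so the bound is tight there.
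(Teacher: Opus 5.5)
Your proof is correct and is essentially the paper's argument: expand in a basis of $\mc{H}_A$, commute the $B$-block $U_{ij}$ past the $C$-block $V_{jk}$, apply Cauchy--Schwarz, and bound the two resulting sums by $\norm{\Tr_A(U^\dag U)}$ and $\norm{\Tr_A(VV^\dag)}$. The only cosmetic difference is that the paper keeps the index $i$ inside vectors $\ket{\psi_{j,k}},\ket{\phi_{j,k}}\in\mc{H}_{ABC}$ and applies Cauchy--Schwarz over $(j,k)$ alone. Because the $\ket{i}$-components are orthogonal, this is equivalent to your splitting over $(i,j,k)$ in $\mc{H}_{BC}$.
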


\begin{proof}
	We expand $U_{AB}=\sum_{i,j}\ketbra{i}{j}_A\otimes U^{i,j}_B\otimes I_C$ and $V_{AC}=\sum_{j,k}\ketbra{j}{k}_A\otimes I_B\otimes V^{j,k}_C$. Thus, $U_{AB}V_{AC}=\sum_{i,j,k}\ketbra{i}{k}\otimes U^{i,j}_B\otimes V^{j,k}_C$. Next, fix arbitrary unit vectors $\ket{\psi},\ket{\phi}$. Let $\ket{\psi_{j,k}}=\sum_{i}(\ketbra{i}{k}\otimes U^{i,j}_B\otimes I_C)\ket{\psi}$ and $\ket{\phi_{j,k}}=(I_{AB}\otimes (V^{j,k}_C)^\dag)\ket{\phi}$. Then, using the Cauchy-Schwarz inequality,
	\begin{align*}
		\abs*{\braket{\phi}{U_{AB}V_{AC}}{\psi}}^2&=\abs[\Big]{\sum_{j,k}\braket{\phi_{j,k}}{\psi_{j,k}}}^2\leq\sum_{j,k}\braket{\psi_{j,k}}\sum_{j,k}\braket{\phi_{j,k}}.
	\end{align*}
	We can bound the two terms in the product individually. First,
	$$\sum_{j,k}\braket{\psi_{j,k}}=\sum_{i,j,k}\braket{\psi}{\ketbra{k}\otimes (U^{i,j}_B)^\dag U^{i,j}_B\otimes I_C}{\psi}\leq\norm[\Big]{\sum_{i,j}(U^{i,j}_B)^\dag U^{i,j}_B}=\norm{\Tr_A(U_{AB}^\dag U_{AB})},$$
	and similarly $\sum_{j,k}\braket{\phi_{j,k}}=\sum_{j,k}\braket{\phi}{I_A\otimes I_B\otimes V^{j,k}_C(V^{j,k}_C)^\dag}{\phi}\leq\norm{\Tr_A(V_{AC} V_{AC}^\dag)}.$ To finish, note that the operator norm of $U_{AB}V_{AC}$ is attained as the supremum of $\abs*{\braket{\phi}{U_{AB}V_{AC}}{\psi}}$ over all pairs of unit vectors. 
\end{proof}

\begin{proof}[Proof of \cref{thm:ananth-sahai}]
	Let $\ttt{S}=(\mc{H}_B,\mc{H}_C,\{B^\theta_b\}_{\theta,b},\{C^\theta_b\}_{\theta,b},\ketbra{\psi})$ be a strategy for $\ttt{G}$ (we may without loss of generality assume that the measurements are projective and the shared state is pure). We suppose that the winning probability is $>\frac{1}{2}$ (otherwise the result clearly holds). Then, the winning probability in observable form is
	\begin{align*}
		\mfk{w}(\ttt{G},\ttt{S})=\frac{1}{4}+\frac{1}{4}\expec_{\theta\leftarrow\mu}\braket{\psi}{A_\theta\otimes B_\theta\otimes I+A_\theta\otimes I\otimes C_\theta+I\otimes B_\theta\otimes C_\theta}{\psi}.
	\end{align*}
	Write $\mrm{B}=\expec_{\theta\leftarrow\mu}A_\theta\otimes B_\theta\otimes I$, $\Gamma=\expec_{\theta\leftarrow\mu}A_\theta\otimes I\otimes C_\theta$, and $\Delta=\expec_{\theta\leftarrow\mu}I\otimes B_\theta\otimes C_\theta$. By construction $-I\leq \mrm{B},\Gamma,\Delta\leq I$. We may without loss of generality suppose that $\ket{\psi}$ is an eigenvector of $\mrm{B}+\Gamma+\Delta$ with maximal eigenvalue. Write $\mfk{w}(\ttt{G},\ttt{S})=\frac{1}{2}+\lambda$ for the winning probability. Then, we have $(\frac{1}{2}+\lambda)\ket{\psi}=\parens*{\frac{1}{4}+\frac{1}{4}(\mrm{B}+\Gamma+\Delta)}\ket{\psi}$, or, rearranging, $$(\mrm{B}+\Gamma)\ket{\psi}=\parens*{4\lambda+(I-\Delta)}\ket{\psi}.$$
	Since $\parens*{4\lambda+(I-\Delta)}$ is positive-definite, we can define $\mrm{E}=\parens*{4\lambda+(I-\Delta)}^{-\frac{1}{2}}$. Consider $\eta\coloneqq\norm{\mrm{E}(\mrm{B}+\Gamma)\ket{\psi}}^2-\norm{\mrm{E}(\mrm{B}-\Gamma)\ket{\psi}}^2$. First, expanding, we see that
	\begin{align*}
		\eta&=\braket{\psi}{(\mrm{B}+\Gamma)\mrm{E}^2(\mrm{B}+\Gamma)}{\psi}-\braket{\psi}{(\mrm{B}-\Gamma)\mrm{E}^2(\mrm{B}-\Gamma)}{\psi}\\
		&=2\braket{\psi}{\mrm{B}\mrm{E}^2\Gamma}{\psi}+2\braket{\psi}{\Gamma\mrm{E}^2\mrm{B}}{\psi}\\
		&=4\latRe\braket{\psi}{\mrm{B}\mrm{E}^2\Gamma}{\psi}.
	\end{align*}
	On the other hand, $\norm{\mrm{E}(\mrm{B}+\Gamma)\ket{\psi}}^2=\norm*{\parens{4\lambda+(I-\Delta)}^{\frac{1}{2}}\ket{\psi}}^2=4\lambda+1-\braket{\psi}{\Delta}{\psi}$. To bound the second term, we claim that $(\mrm{B}-\Gamma)\mrm{E}^2(\mrm{B}-\Gamma)\leq I-\Delta$. In the following, we assume that $\mrm{B}-\Gamma$ and $I-\Delta$ are invertible; since the invertible matrices are dense, we can approximate the matrices by invertible matrices to arbitrary precision. First,
	\begin{align*}
		I-\Delta\pm(\mrm{B}-\Gamma)&=\expec_{\theta\leftarrow\mu}I\pm(A_\theta\otimes B_\theta\otimes I+A_\theta\otimes I\otimes C_\theta)+I\otimes B_\theta\otimes C_\theta\\
		&=\expec_{\theta\leftarrow\mu}(I\pm A_\theta\otimes B_\theta\otimes I)(I\pm A_\theta\otimes I\otimes C_\theta)\geq 0,
	\end{align*}
	as each term is the product of commuting projections.
	This implies that $-I\leq(I-\Delta)^{-1/2}(\mrm{B}-\Gamma)(I-\Delta)^{-1/2}\leq I$. As all the eigenvalues are contained between $-1$ and $1$, taking the square gives
	$(I-\Delta)^{-1/2}(\mrm{B}-\Gamma)(I-\Delta)^{-1}(\mrm{B}-\Gamma)(I-\Delta)^{-1/2}\leq I.$
	Rearranging, we get that $$(\mrm{B}-\Gamma)(I-\Delta)^{-1}(\mrm{B}-\Gamma)\leq I-\Delta\leq 4\lambda+I-\Delta=\mrm{E}^{-2}.$$ Inverting, $(\mrm{B}-\Gamma)^{-1}(I-\Delta)(\mrm{B}-\Gamma)^{-1}\geq\mrm{E}^2$, which immediately implies $I-\Delta\geq(\mrm{B}-\Gamma)\mrm{E}^2(\mrm{B}-\Gamma)$.
	This implies that
	\begin{align*}
		\eta\geq4\lambda+1-\braket{\psi}{\Delta}{\psi}-\braket{\psi}{I-\Delta}{\psi}=4\lambda.
	\end{align*}
	Using the two expressions for $\eta$, $\lambda\leq\latRe\braket{\psi}{\mrm{B}\mrm{E}^2\Gamma}{\psi}\leq\norm{\mrm{B}\mrm{E}^2\Gamma}$. It remains to bound this overlap. First, we can see that this can be controlled by the simpler overlap $\norm{\mrm{B}\Gamma}$. To do so, note that $1+4\lambda>1$, which implies that $\frac{\Delta}{1+4\lambda}$ has its eigenvalues contained in the interval $(-1,1)$. As such, $\mrm{E}^2$ admits the power series expansion
	\begin{align*}
		\mrm{E}^2=(4\lambda+1-\Delta)^{-1}=\frac{1}{1+4\lambda}\parens*{1-\frac{\Delta}{1+4\lambda}}^{-1}=\frac{1}{1+4\lambda}\sum_{n=0}^\infty\frac{\Delta^n}{(1+4\lambda)^n}.
	\end{align*}
	Hence, $\norm{\mrm{B}\mrm{E}^2\Gamma}\leq\sum_{n=0}^\infty\frac{\norm{\mrm{B}\Delta^n\Gamma}}{(1+4\lambda)^{n+1}}$. For each term, we expand $\Delta^n$ to get
	\begin{align*}
		\norm{\mrm{B}\Delta^n\Gamma}&=\norm[\Big]{\expec_{\theta_1,\ldots,\theta_n\leftarrow\mu}\mrm{B}(I\otimes B_{\theta_1}\cdots B_{\theta_n}\otimes C_{\theta_1}\cdots C_{\theta_n})\Gamma}\\
		&=\norm[\Big]{\expec_{\theta_1,\ldots,\theta_n\leftarrow\mu}(I\otimes I\otimes C_{\theta_1}\cdots C_{\theta_n})\mrm{B}\Gamma(I\otimes B_{\theta_1}\cdots B_{\theta_n}\otimes I)}\\
		&\leq\expec_{\theta_1,\ldots,\theta_n\leftarrow\mu}\norm{I\otimes I\otimes C_{\theta_1}\cdots C_{\theta_n}}\norm{\mrm{B}\Gamma}\norm{I\otimes B_{\theta_1}\cdots B_{\theta_n}\otimes I}\\
		&=\norm{\mrm{B}\Gamma}.
	\end{align*}
	Thus, $\norm{\mrm{B}\mrm{E}^2\Gamma}\leqq\sum_{n=0}^\infty\frac{\norm{\mrm{B}\Gamma}}{(1+4\lambda)^{n+1}}=\frac{1}{1+4\lambda}\frac{1}{1-\frac{1}{1+4\lambda}}\norm{\mrm{B}\Gamma}=\frac{\norm{\mrm{B}\Gamma}}{4\lambda},$ giving $4\lambda^2\leq\norm{\mrm{B}\Gamma}$. To bound this overlap, we use \cref{lem:ool} to get $4\lambda^2\leq\sqrt{\norm{\Tr_A(\mrm{B}^2)}\norm{\Tr_A(\Gamma^2)}}$. Now we can relate this bound to the observable overlap:
	\begin{align*}
		\norm{\Tr_A(\mrm{B}^2)}=\norm[\Big]{\expec_{\theta,\theta'\leftarrow\mu}\Tr(A_\theta A_{\theta'})B_\theta B_{\theta'}}\leq\expec_{\theta,\theta'\leftarrow\mu}\abs*{\Tr(A_\theta A_{\theta'})}=c(\{A_\theta\}_\theta,\mu),
	\end{align*}
	and identically $\norm{\Tr_A(\Gamma^2)}\leq c(\{A_\theta\}_\theta,\mu)$. Hence, we find that $\lambda\leq\frac{1}{2}\sqrt{c(\{A_\theta\}_\theta,\mu)}$, as wanted.
\end{proof}

\section{Main result}

In this section, we shown unconditional uncloneable security of the Clifford scheme with messages of arbitrary length, using the results of the prior two sections. That is, for any polynomially-bounded functions $p(\lambda)$ and $q(\lambda)$ such that $p(\lambda)-q(\lambda)=\omega(\log(\lambda))$, \cref{thm:main} shows that the efficient QECM $\{\ttt{Q}_{\mc{C}_{p(\lambda)},q(\lambda)}\}_{\lambda}$ is strongly uncloneable-indistinguishable secure, and \cref{cor:unc} shows that it is strongly uncloneable secure as well.

\begin{theorem}\label{thm:main}
	Let $m\leq n$, and write $d=2^{n-m+1}$. Then, the cloning-distinguishing value is $$\mfk{cd}(\ttt{Q}_{\mc{C}_n,m})\leq\frac{1}{2}+\frac{d^{3/2}-1}{2(d^2-1)}\leq\frac{1}{2}+\frac{1}{2\sqrt{d}}.$$
\end{theorem}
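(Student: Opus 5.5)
The plan is to combine the reduction of \cref{thm:one-to-many} with the bound of \cref{thm:ananth-sahai}, applied not to the Clifford game directly but to an auxiliary game over the full Pauli group. Write $k=n-m+1$, so $d=2^k$.

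First, \cref{thm:one-to-many} gives $\mfk{cd}(\ttt{Q}_{\mc{C}_n,m})\leq\mfk{cd}(\ttt{Q}_{\mc{C}_k,1})=\mfk{c}(\ttt{Q}_{\mc{C}_k,1})\leq\mfk{w}(\ttt{G}_{\mc{C}_k,1})$, using the relations noted after the definitions.

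In the game $\ttt{G}_{\mc{C}_k,1}$, Alice's observable for question $U$ is $A_U=UZ_1U^\dag$. This is a signed non-identity Pauli $\pm P$. Since the Clifford group acts transitively on the non-identity Paulis modulo sign, $P$ is uniform over the $d^2-1$ non-identity Paulis. A sign flip on Alice's side can be absorbed by having Bob and Charlie flip their answers. Averaging over the Cliffords mapping $Z_1$ to $\pm P$ is a convexity step: the value is at most the supremum over strategies that depend only on $P$ and the sign. So $\mfk{w}(\ttt{G}_{\mc{C}_k,1})$ is bounded by the value $\frac12+\beta$ of the game $\ttt{G}'$ whose question is a uniform non-identity Pauli $P$, with Alice measuring the observable $P$.

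Applying \cref{thm:ananth-sahai} to $\ttt{G}'$ directly gives overlap $c=d/(d^2-1)$, because $|\Tr(PP')|=d\,\delta_{P,P'}$. That yields only $\beta\le\frac12\sqrt{d/(d^2-1)}$, which is weaker than claimed. The sharper bound will come from embedding $\ttt{G}'$ into the game $\ttt{G}''$ where $P$ is uniform over all $d^2$ Paulis, including $I$.

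For $\ttt{G}''$ the observables are Hilbert--Schmidt orthogonal and uniformly distributed. Hence $c=d/d^2=1/d$, and \cref{thm:ananth-sahai} gives $\mfk{w}(\ttt{G}'')\le\frac12+\frac1{2\sqrt d}$.

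Conversely, take any strategy for $\ttt{G}'$ with bias $\beta$. Extend it to $\ttt{G}''$ by letting Bob and Charlie both answer $0$ on question $I$. Since Alice's measurement $A_I=I$ always outputs $0$, this question is won with certainty, i.e.\ bias $\frac12$. Thus
\begin{align*}
	\frac{1}{d^2}\cdot\frac12+\frac{d^2-1}{d^2}\,\beta\le\frac{1}{2\sqrt d},
\end{align*}
which rearranges to $\beta\le\frac{d^{3/2}-1}{2(d^2-1)}$.

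The final inequality $\frac{d^{3/2}-1}{d^2-1}\le\frac{1}{\sqrt d}$ is elementary: it is equivalent to $d^2-\sqrt d\le d^2-1$, i.e.\ $\sqrt d\ge1$.

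The main obstacle I expect is the reduction from the Clifford-indexed game to the Pauli-indexed game $\ttt{G}'$. It requires that the question distribution pushed forward to $\{\pm P\}$ is uniform over non-identity Paulis, and that the sign can be absorbed without loss. Everything else is direct substitution into \cref{thm:ananth-sahai}.
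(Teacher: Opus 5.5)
Your proposal is correct and follows essentially the same route as the paper. You reduce via \cref{thm:one-to-many} to the single-bit Clifford game, pass to the non-identity Pauli game, embed it in the full-Pauli game where the identity question is won with certainty, apply \cref{thm:ananth-sahai} with overlap $1/d$, and rearrange. Your explicit Clifford-to-Pauli reduction (uniform pushforward, sign absorption) fills in a step the paper only asserts, and it is sound provided the ``convexity step'' means choosing, for the fixed shared state, the best Clifford in each fiber over $\pm P$ (average $\leq$ maximum). It should not be read as averaging Bob's and Charlie's POVMs separately, which would not preserve the product $B^U_b\otimes C^U_b$.
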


A result of~\cite{BCR25} gives that the cloning-distinguishing value is lower-bounded by $\frac{1}{2}+O\parens*{\frac{1}{\sqrt{2^{n-m}}}}$, so this gives the optimal behaviour.

\begin{proof}
	Let $N=n-m+1$. Recall that $\ttt{G}_{\mc{C}_N,1}$ is the MoE game associated to $\ttt{Q}_{\mc{C}_N,1}$; we have that $\mfk{cd}(\ttt{Q}_{\mc{C}_N,1})=\mfk{c}(\ttt{Q}_{\mc{C}_N,1})\leq\mfk{w}(\ttt{G}_{\mc{C}_N,1})$. In the game, the observables consist of all the Paulis in $(\C^2)^{\otimes N}$ except for the identity, uniformly distributed. Consider the MoE game $\ttt{G}$ where the observables are all the Paulis in $(\C^2)^{\otimes N}$ with uniform question distribution. Since these form an orthogonal basis for $B((\C^2)^{\otimes N})$ with respect to the Hilbert-Schmidt inner product, we have by \cref{thm:ananth-sahai} that $\mfk{w}(\ttt{G})\leq\frac{1}{2}+\frac{1}{2}\sqrt{\frac{d}{d^2}}=\frac{1}{2}+\frac{1}{2\sqrt{d}}$. On the other hand, since $\ttt{G}$ has all the same observables as $\ttt{G}_{\mc{C}_N,1}$ except for the identity, corresponding to a measurement that always gives the same answer, we find that $\mfk{w}(\ttt{G})=\frac{d^2-1}{d^2}\mfk{w}(\ttt{G}_{\mc{C}_N,1})+\frac{1}{d^2}$. As such, we can bound
	\begin{align*}
		\mfk{w}(\ttt{G}_{\mc{C}_N,1})&\leq\frac{d^2}{d^2-1}\parens*{\frac{1}{2}+\frac{1}{2\sqrt{d}}-\frac{1}{d^2}}=\frac{1}{2}+\frac{d^{3/2}-1}{2(d^2-1)}.
	\end{align*}
	To finish, we use \cref{thm:one-to-many} to see that $\mfk{cd}(\ttt{Q}_{\mc{C}_n,m})\leq\mfk{cd}(\ttt{Q}_{\mc{C}_N,1})\leq\mfk{w}(\ttt{G}_{\mc{C}_N,1})\leq\frac{1}{2}+\frac{d^{3/2}-1}{2(d^2-1)}$, as wanted. To get the simpler but looser upper bound, note that
	\begin{align*}
		\frac{d^{3/2}-1}{2(d^2-1)}&=\frac{d^{2}-\sqrt{d}}{2(d^2-1)\sqrt{d}}\leq\frac{d^{2}-1}{2(d^2-1)\sqrt{d}}=\frac{1}{2\sqrt{d}}.\qedhere
		\end{align*}
\end{proof}

The bound above also extends to the cloning value of the QECM scheme.

\begin{corollary}\label{cor:unc}
	The cloning value of the above scheme is $$\mfk{c}(\ttt{Q}_{\mc{C}_n,m})\leq\frac{1}{2^m}+\frac{1}{2^{\frac{n-m+1}{2}}}.$$
\end{corollary}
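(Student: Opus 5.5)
The plan is to reduce \cref{cor:unc} to \cref{thm:main} by turning an arbitrary cloning attack into a cloning-distinguishing attack on a random pair of messages, and then averaging over the pair. Note that $\frac{1}{2\sqrt d}=\frac12\cdot 2^{-\frac{n-m+1}{2}}$, so by \cref{thm:main}
\begin{align*}
	2\parens*{\mfk{cd}(\ttt{Q}_{\mc{C}_n,m})-\tfrac12}\leq 2^{-\frac{n-m+1}{2}}.
\end{align*}
It therefore suffices to prove the inequality
\begin{align*}
	\mfk{c}(\ttt{Q}_{\mc{C}_n,m})-\frac{1}{2^m}\leq 2\parens*{\mfk{cd}(\ttt{Q}_{\mc{C}_n,m})-\frac12}.
\end{align*}

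Fix a cloning attack $\ttt{A}=(B,C,\{B^U_x\},\{C^U_x\},\Phi)$ with success probability $\mfk{c}(\ttt{Q}_{\mc{C}_n,m},\ttt{A})=p$. For an ordered pair $x_0\neq x_1$ I build a cloning-distinguishing attack $\ttt{A}_{x_0,x_1}$ as follows.
\begin{itemize}
	\item The channel $\Phi$ is augmented so that it also hands Bob and Charlie a shared uniformly random bit $r$, i.e.\ a classical copy $\frac12\sum_r\ketbra{rr}$. This is legitimate because $\Phi$ is an arbitrary CPTP map.
	\item Bob measures $\{B^U_y\}_y$ and outputs $b$ if $y=x_b$, and outputs $r$ if $y\notin\{x_0,x_1\}$. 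Charlie does the same with $\{C^U_z\}_z$.
\end{itemize}
Given message $x_b$, both parties are correct in each of the following cases:
\begin{itemize}
	\item $y=z=x_b$, which contributes with weight $1$;
	\item one of $y,z$ equals $x_b$ and the other lies outside the pair, which contributes with weight $\frac12$ (the coin must come out as $b$);
	\item both $y,z$ lie outside the pair, which also contributes with weight $\frac12$.
\end{itemize}
Both parties are wrong whenever either of them lands on $x_{1-b}$.

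Next I average over a uniformly random ordered pair. Since $\mfk{cd}\geq\expec_{x_0\neq x_1}\mfk{cd}(\ttt{Q},\ttt{A}_{x_0,x_1})$, I may write the true message as $x$ and the decoy as a uniformly random $x'\neq x$. Denote by $p_x$ the probability that $y=z=x$, so that $\expec_x p_x=p$. The remaining probability mass, where at least one guess differs from $x$, gets success $\geq\frac12$ except on the event that some guess equals $x'$. Because $x'$ is uniform among $2^m-1$ values, that event has probability at most $\frac{2(1-p_x)}{2^m-1}$ on average. This gives a bound of the form
\begin{align*}
	\mfk{cd}\geq p+\tfrac12(1-p)-\frac{1-p}{2^m-1}=\tfrac12+\tfrac{p}{2}-\frac{1-p}{2^m-1}.
\end{align*}

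The main obstacle is making the loss from the decoy landing exactly match the $\frac1{2^m}$ baseline, so that the final result is $p-\frac{1}{2^m}\leq 2(\mfk{cd}-\frac12)$ rather than something weaker. The crude estimate above loses roughly a factor of $2$ in the $\frac{1}{2^m}$ term. To fix this, I would redo the accounting more carefully. The key observation is that when Bob lands on $x'$ but Charlie lands outside the pair, the contribution is $0$ rather than $-\frac12$. I would split the probability into the joint events $(y,z)\in\{x,x',\text{other}\}^2$ and use the Clifford invariance from the proof of \cref{thm:one-to-many} to make the distribution of wrong guesses symmetric over the $2^m-1$ non-message values. If the exact constant still does not come out, I would settle for a bound of the shape $\mfk{c}\leq\frac{1}{2^m}+O(2^{-\frac{n-m}{2}})$, which gives the same asymptotic conclusion. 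Either way, the last step is to plug in \cref{thm:main}.
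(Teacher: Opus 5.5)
There is a genuine gap. Your computation is correct, but it only reaches $\mfk{c}(\ttt{Q}_{\mc{C}_n,m})\leq\frac{2}{2^m+1}+\frac{2^m-1}{2^m+1}\,2^{-\frac{n-m+1}{2}}$, and neither proposed repair gets to $\frac{1}{2^m}$.

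Done exactly, your accounting gives the following for true message $x$, averaging over the decoy $x'\neq x$ and the coin:
\[
2\Big(\Pr[\text{win}\mid x]-\frac{1}{2}\Big)=p_x-\frac{\Pr[y\neq x]+\Pr[z\neq x]-\Pr[y=z\neq x]}{2^m-1}.
\]
This coincides with your crude estimate whenever, outside the event $y=z=x$, both guesses are wrong and distinct. The loss term depends only on the two error rates and on how often the wrong guesses agree. These quantities are unchanged by relabelling messages, so Clifford symmetrisation cannot shrink the loss.

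Nor is the loss an artefact of your particular rule. Suppose that with probability $p\geq 2^{-m}$ both guess $x$, and otherwise $y,z$ are independent and uniform on the remaining $2^m-1$ messages. One checks that the best advantage over all local post-processings of $y$ and $z$ into bits, even with shared randomness, is $\max\{0,\,p-(1-p)\frac{2^{m+1}-3}{(2^m-1)^2}\}$. This is $0$ for all $p\leq\frac{2^{m+1}-3}{2^{2m}-2}\approx 2^{1-m}$. So an argument that only post-processes the two guesses cannot certify the constant $\frac{1}{2^m}$ without further information about which guess distributions the scheme allows.

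Your fallback does not follow either. The excess $\frac{2}{2^m+1}-\frac{1}{2^m}=\frac{2^m-1}{2^m(2^m+1)}$ is a positive constant for fixed $m\geq2$, not $O(2^{-\frac{n-m}{2}})$.

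The paper's proof runs the same reduction, with the one-sided rule ``output $0$ iff the guess is $x_0$''. After averaging over ordered pairs, this has the same success probability as your shared coin. The paper reaches $\frac{1}{2^m}$ only via $\Pr[b_B=b_C=1\mid x_1]=1-\Pr[m_B=m_C=x_0\mid x_1]\geq1-\Pr[m_B=x_0\mid x_1]$. That identity is false: the complement of $b_B=b_C=1$ is the event that $m_B=x_0$ or $m_C=x_0$. For instance, with $m=2$ and Bob and Charlie guessing independently and uniformly, $\Pr[b_B=b_C=1\mid x_1]=\frac{9}{16}<\frac{3}{4}$. Once corrected, the paper's argument yields exactly your factor-$2$ loss.

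So you have found a real obstacle that the paper's proof does not overcome, and you cannot close your gap by following it. The stated constant for $m\geq2$ needs an idea beyond post-processing the cloning attack's guesses. Two things do survive. For $m=1$ the statement is immediate, since $\mfk{c}=\mfk{cd}$ there and \cref{thm:main} applies directly. And your bound still gives strong uncloneable security whenever $m$ and $n-m$ are both $\omega(\log\lambda)$.
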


\begin{proof}
	Suppose $\ttt{A}$ is a cloning attack against $\ttt{Q}_{\mc{C}_n,m}$. We construct a cloning-distinguishing attack $\ttt{A}'$ as follows. First, sample random messages $X_0$ and $X_1$. Then, Bob and Charlie run the cloning attack. For either player, if they receive output $x_0$, they output $0$, and otherwise they output $1$. If Alice samples $x_0$, the success probability is $\Pr[b_B=b_C=0|x_0]=\mfk{c}(\ttt{Q}_{\mc{C}_n,m},\ttt{A})$; if Alice samples $x_1$, the success probability is $$\Pr[b_B=b_C=1|x_1]=1-\Pr[m_B=m_C=x_0|x_1]\geq1-\Pr[m_B=x_0|x_1]\geq1-\frac{1}{2^m},$$ as $x_0$ and $x_1$ are sampled uniformly. As such, $\mfk{cd}(\ttt{Q}_{\mc{C}_n,m},\ttt{A}')\geq\frac{1}{2}\mfk{c}(\ttt{Q}_{\mc{C}_n,m},\ttt{A})+\frac{1}{2}\parens*{1-\frac{1}{2^m}}=\frac{1}{2}+\frac{1}{2}\parens*{\mfk{c}(\ttt{Q}_{\mc{C}_n,m},\ttt{A})-\frac{1}{2^m}}$. By \cref{thm:main}, $\mfk{cd}(\ttt{Q}_{\mc{C}_n,m},\ttt{A}')\leq\frac{1}{2}+\frac{1}{2\sqrt{2^{n-m+1}}}$, which implies $\mfk{c}(\ttt{Q}_{\mc{C}_n,m},\ttt{A})\leq\frac{1}{2^m}+\frac{1}{\sqrt{2^{n-m+1}}}$.
\end{proof}

\small
\bibliography{bibtex/bib/full.bib,bibtex/bib/quantum.bib,bibtex/quantum_new.bib}

\normalsize

\end{document}